
\documentclass[letterpaper, 10 pt, conference]{ieeeconf}  

\IEEEoverridecommandlockouts                              
\overrideIEEEmargins

\usepackage{cite}
\usepackage{amsmath,amssymb,amsfonts}
\usepackage{graphicx}
\usepackage{algorithm2e}
\usepackage{textcomp}
\usepackage{xcolor}

\DeclareMathOperator*{\argmax}{arg\,max}
\DeclareMathOperator*{\argmin}{arg\,min}
\newtheorem{theorem}{Theorem}
\newtheorem{lemma}{Lemma}
\allowdisplaybreaks[4]

\newcommand{\online}[1]{#1}

\title{\LARGE \bf
Efficient State Estimation of a Networked FlipIt Model
}


\author{Brandon Collins, Thomas Gherna, Keith Paarporn, Shouhuai Xu, and Philip N. Brown
\thanks{This work was supported in part by AFOSR grant FA9550-23-1-0171 and the DoD UC2 program.}
\thanks{All authors are with with the Department of Computer Science,
        1420 Austin Bluffs Pkwy, Colorado Springs, CO 80918, USA
        {\tt\small \{bcollin3,kpaarpor,sxu,pbrown2\}@uccs.edu}}%
}

\begin{document}

\maketitle
\thispagestyle{empty}
\pagestyle{empty}

\begin{abstract}

The Boolean Kalman Filter and associated Boolean Dynamical System Theory have been proposed to study the spread of infection on computer networks.
Such models feature a network where attacks propagate through, an intrusion detection system that provides noisy signals of the true state of the network, and the capability of the defender to clean a subset of computers at any time.
The Boolean Kalman Filter has been used to solve the optimal estimation problem, by estimating the hidden true state given the attack-defense dynamics and noisy observations.
However, this algorithm is intractable because it runs in exponential time and space with respect to the network size. 
We address this feasibility problem by proposing a mean-field estimation approach, which is inspired by the epidemic modeling literature.
Although our approach is heuristic, 
we prove that our estimator exactly matches the optimal estimator in certain non-trivial cases.
We conclude by using simulations to show both the run-time improvement and estimation accuracy  of our approach.

\end{abstract}

\section{Introduction}
Cybersecurity has been extensively studied with 
mathematical treatments 
while assuming full and accurate information
(see, e.g. \cite{alpcan2003game,XuBookChapterCD2019}). However, the StuxNet incident \cite{farwell2011stuxnet} served as a ``wake-up call'' on studying partial and noisy information, dubbed  low-information, where partial means defenders' sensors are only employed at limited places and noisy means the sensors provide inaccurate information. This can be attributed to both the inadequacy of sensor employments (mandated by cost constraints) and the sophistication of attacks, especially the new or zero-day attacks waged by nation-state actors, also known as
Advanced Persistent Threats (APTs).

For modeling APTs, one family of studies are centered around the FlipIt model \cite{van2013flipit}.
The original FlipIt model is built around three core assumptions. (i) There is a computer that an attacker and defender can seize control over it at any time.
This is often interpreted as the attacker using an APT to gain control and the defender reinstalling the OS or reimaging a virtual machine.
(ii) Neither the attacker nor the defender has any information about who currently controls the computer.
(iii) Seizing control of the computer comes at a cost, in terms of effort by the attacker and the  lost uptime to the defender.
Thus, FlipIt can be seen as a {\em low-information} model to study how often to deploy security measures against a capable attacker.

Although many extensions to FlipIt have been proposed, including multiple computers \cite{laszka2014flipthem}, deployment of honeypots \cite{tian2019honeypot}, insider threats \cite{feng2015stealthy}, and even computer networks \cite{saha2017flipnet,banik2024flipdyn}, but there are limitations to the practical application of these models.
Primairly, the {\em low-information} assumption does not reflect reality in many settings because
Intrusion Detection Systems (IDSes) are widely deployed both on computers (or hosts) and in networks to detect cyber attacks.
Although far from perfect, IDSes provide a noisy reading of the true states of computers, which is 
not 
modeled by FlipIt-style models.

Recently, the theory of Boolean Dynamical Systems \cite{imani2016maximum} has been applied to study the APT problem with both noisy IDSes and network structures.
Originally proposed 
as a discrete-time dynamical system where each node has a state $0$ or $1$ 
and the dynamics are described by arbitrary modulo 2 operations between connected nodes.
%
The network structures and noisy observations introduce the following non-trivial state estimation problem (in addition to the FlipIt-style 
{decision problem): What is the probability that a given computer is compromised (or infected) at a specific point in time given the network structure and the history of IDS signals (i.e., alerts)?
Although this estimation problem has been optimally solved by the Boolean Kalman Filter (BKF) \cite{braga2011optimal}, it incurs an exponential time and space in terms of the size of the network because it requires computation over every possible combination of network-wide computer states.

In this work, we propose a novel estimator,
dubbed the Mean Field Analysis (MFA) estimator, because its equations resemble the MFA dynamics from networked epidemic spreading models \cite{paarporn2017networked,armbruster2017elementary,armbruster2017elementary2,XuTAAS2012,WangSRDS03,mei2017dynamics,nowzari2016analysis,gleeson2012accuracy,pare2020modeling,silva2024accuracy}.
In the proposed MFA method, we keep track of the probability of infection for each node (which requires $n$ real numbers) rather than a distribution over all possible states (which requires $2^n$ real numbers), leading to a polynomial-time estimation algorithm.
Our results are summarized as follows:
\begin{enumerate}
    \item We propose a novel estimator (the MFA algorithm) based on an approximation of Bayes' Law using the Maximum Entropy Principle.
    \item We prove that all steps of the MFA algorithm are exact and efficient computations of the corresponding step of the BKF algorithm, under the condition that the prior belief is an appropriate maximum entropy distribution (Theorems~\ref{thm:XMPi},\ref{thm:XTPi}).
    \item We run simulations to compare the performance of the MFA and BKF estimators.
\end{enumerate}
Thus, we provide a novel estimator that is both theoretically justified and practical for running on large real-world networks.
Notably, our proposed method is suitable for real-time computation on large networks with hundreds of nodes, where the state-of-the-art approach BKF takes hours on graphs with just 14 nodes.
Notably, a previous approach also developed a polynomial-time algorithm to estimate the BKF based on particle filters \cite{imani2018particle}.
In contrast to their approach, our estimation approach leverages specific problem characteristics to draw parallels with MFA as it is applied in epidemic theory.

\section{Problem Formulation}
\subsection{Model}
We model the spread of cyber attacks in a computer network where the true state of each node is unknown.
This can be considered a hidden Markov model with boolean states and given noisy state observations
at each discrete time step.
We consider networks with $n$ nodes and let $\{x_t;t=0,1,2,\dots\}$ be a state process where $x_t\in\{0,1\}^n$ is the state of the process at time $t$.
We use $x_{t,l}$ to denote the state of computer $l$ at time $t$ where $x_{t,l}=1$ means the computer is compromised and $x_{t,l}=0$ means the computer is secure. 
Let $X=[x^1,x^2,\ldots, x^{2^n}]$ be an $n \times 2^n$ matrix where each $x^i\in\{0,1\}^n$ represents a unique state.
Together, all of the columns $x^i$ of $X$ include all possible states in $\{0,1\}^n$.
Occasionally, it is convenient to abuse notations somewhat and treat $X$ as a set, namely $X=\{x^1,x^2,\dots,x^{2^n}\}$.
We denote a distribution over all $2^n$ states as $\Pi\in\Delta(X)$, where $\Delta(X)=\{\Pi\in[0,1]^{2^n}\mid \sum \Pi =1 \}$ is the space of all distributions over $X$.
Further, $P_t\in [0,1]^n$ a vector of distributions of each node, where $P_t,l$ encodes the marginal belief that node $l$ is infected at time $t$.
To avoid confusion, we use consistent symbols for indexing, $t\in \{0,1,2,\ldots\}$ is a time index, $l,k,r\in V$ are node indexes, and $i,j\in\{1,2,\ldots,2^n\}$ are indexes over $X$ (e.g., $x^i,x^j\in X$).
We will use the indexes $i,j$ from $x^i,x^j$ to refer to the related probability in $\Pi$ as $\Pi_i,\Pi_j$.
Due to the amount of indexing required, we only use time indexes, when necessary, to describe the evolution of $x_t,\Pi_t$ and $P_t$ over time.

Underlying the computers is a directed graph or network structure $G=(V,E)$ where the set of computers (i.e., nodes) is denoted by $V=\{1,2,\ldots,n\}$, and the links (or connections) between them are given by directed edge set $E$.
A computer $l$ can propagate an attack to computer $k$ if $(l,k)\in E$.
The set of connections models 
the communications permitted by the cybersecurity policy (known as {\em attack-defense} structure \cite{XuTAAS2012}), rather than representing the physical network structure or topology.
Let $D_l=\{k\in V\mid (k,l)\in E\}$ be the in-neighbor set of a node $l$.

At each time step, if an in-neighbor $k$ of node $l$ is compromised, then the attack will successfully compromise node $l$ with probability $\rho_{kl}\in[0,1]$.
At each time step the defender may opt to clean each node individually, indicated by $a_t\in\{0,1\}^n$, where $a_{t,l}=1$ means the defender cleans node $l$ at time $t$ even if the defender is not certain whether node $l$ is compromised or not, and conversely for $a_{t,l}=0$.
For model generality, we assume the cleaning will have a failure probability $\alpha\in[0,1]$, meaning a cleaning will succeed with probability $(1-\alpha)$.
Finally, we assume that node $1$ models an external threat and is always compromised, namely $x_{t,1}=1$ for all $t$, and cannot be cleaned, namely $a_{t,1}=0$ for all $t$.
Particularly, any node $l$ that has an edge from node 1, namely $(1,l)\in E$, is thought to be vulnerable to external threats and any subset of nodes may have connections from node 1.
This assumption is made 
out of mathematical convenience and can be seen as equivalent to the environment (in which a network resides) such that the nodes in the network may be susceptible to pull-based (e.g., drive-by download) attacks that were first mathematically modeled in \cite{XuTAAS2012}); moreover, \cite{kazeminajafabadi2024optimal} considers an equivalent version where the attack is not explicitly modeled in the network.

The dynamics of the true network state is given by
\begin{equation}\label{eq:xt dynamics}
\begin{aligned}
    \Pr(x_{t+1,l}=1\mid x_t,a_t)&=
    (1+(\alpha-1) a_{t,l})\bigg(x_{t,l}
    +\\
    &(1-x_{t,l})\bigg[1-\prod_{k\in D_l}(1-\rho_{kl}x_{t,k})\bigg]  \bigg),
\end{aligned}
\end{equation}
which gives the probability that node $l$ is compromised given the previous state and cleaning action $(x_t,a_t)$, the graph structure, and the attack success parameters $\rho_{kl}$.
The cleaning success term $(1+(\alpha-1) a_{t,l})$ denotes the probability the node is not cleaned successfully, and the subsequent term denotes the probability that $l$ is compromised given the previous state of the network $x_t$.
Particularly, if $l$ was not previously compromised, namely $x_{t,l}=0$, then the compromise probability is determined to the term with the product, which is
the probability at least one successful attack comes from any in-neighbor.

After the attack and cleaning process has concluded, we assume that we receive a noisy measurement of the true state of the node, denoted by $y_t\in\{0,1\}^n$,
where $y_{t,l}=1$ indicates that IDS (potentially incorrectly) detected $x_{t,l}=1$ and likewise for $y_{t,l}=0$, namely that the noises are incurred by the False Positives and False Negatives of IDS.
The observation $y_t$ is noisy as follows
\begin{equation}
y_{t,l}=\begin{cases}
        1 \mbox{ if }x_{t,l}=1 & \mbox{with probability }p\\
        0 \mbox{ if }x_{t,l}=1 & \mbox{with probability }1-p\\
        0 \mbox{ if }x_{t,l}=0 & \mbox{with probability } q\\
        1 \mbox{ if }x_{t,l}=0 & \mbox{with probability }1-q\\
    \end{cases}
\end{equation}
where $p,q\in[0,1]$ are the {True Positive} and {True Negative} probabilities respectively.
Occasionally, we use $\tilde{p}=1-p$ and $\tilde{q}=1-q$ for compactness.

Given that the true state $x_t$ is unknown but dynamics \eqref{eq:xt dynamics}, cleanings $a_t$, and observations $y_t$ are known, we seek an estimator $\hat{x}_t\in \{0,1\}^n$ of
the true state $x_t$.
To measure the quality of an estimator, we use the mean-squared error (MSE) criterion 
\begin{equation}
    C(x_t,\hat{x}_t)=\mathbb{E}(||x_t-\hat{x}_t(a_{0:t-1},y_{1:t})||^2_2\mid a_{0:t-1},y_{1:t})
\end{equation}
where $||\cdot||_2$ is the $\ell_2$ norm.
This objective function counts the expected number of nodes incorrectly estimated given the history of observations and cleanings.
Thus, we seek estimators that can 
(approximately) achieve optimization with respect to MSE, namely 
\begin{equation}\label{eq:MSE optimality}
    \hat{x}^*_t\in \argmin_{\hat{x}_t\in \Psi} C(x_t,\hat{x}_t)
\end{equation}
where $\Psi$ is the set of all functions that map $a_{0:t-1},y_{1:t}$ onto $\{0,1\}^n$.


\subsection{The Boolean Kalman Filter }

\RestyleAlgo{ruled} 
\begin{algorithm}
\caption{The Boolean Kalman Filter Algorithm\label{algo:BKF}}
\KwData{$\Pi_{0\mid 0}$}
\KwResult{$\hat{x}^*_1,\hat{x}^*_2,\hat{x}^*_3,\cdots$}
\For{t=1,2,\dots}{
            \nl$\Pi_{t|t-1}\gets M_t(a_{t-1})\Pi_{t-1|t-1}$\label{eq:alg:MPi}\;
            \nl$\Pi_{t|t}\gets \frac{T_t(y_t) \circ\Pi_{t\mid t-1}}{(T_t(y_t))^\top\Pi_{t\mid t-1}}$\label{eq:alg:TPi}\;
            \nl$\hat{x}^*_t=\overline{X\Pi_{t|t}}$\label{eq:alg:XPi}\;
}
\end{algorithm}

The optimal estimator on the preceding problem, known as the Boolean Kalman Filter (BKF), has been given in \cite{kazeminajafabadi2024optimal} (with minor technical differences).
Before giving the full details of the Algorithm~\ref{algo:BKF}, we discuss the high-level ideas behind it.
The algorithm has three steps at each iteration,
where the probability distribution $\Pi_{t-1|t-1}\in\Delta(X)$ is updated to $\Pi_{t|t}\in\Delta(X)$, which is leveraged to produce the optimal MSE estimator.
The distribution $\Pi_{t-1|t-1}$ is known as the \emph{prior belief} and is the existing belief across all states $X$.
In the first step, the matrix-vector product is taken
\begin{equation}\label{eq:attack update}
    \Pi_{t|t-1}=M_t(a_{t-1})\Pi_{t-1|t-1},
\end{equation}
where $M_t$ is a $2^n\times 2^n$ column stochastic matrix that represents the expected dynamics of \eqref{eq:xt dynamics} with the cleaning vector $a_{t-1}$.
which is then operated on by observation information (encoded as vector $T_t(y_t)$) to obtain the updated belief $\Pi_{t\mid t}$, as follows:
\begin{equation}\label{eq:observation update}
    \Pi_{t\mid t}=\frac{T_t \circ\Pi_{t\mid t-1}}{(T_t)^\top\Pi_{t\mid t-1}},
\end{equation}
where $\circ$ denotes the element-wise multiplication operator and $\top$ denotes the transpose operator.
The vector $T_t(y_t)$ leverages the actual observation $y_t$ with the True Positive and True Negative probabilities $p,q$ to adjust the values of belief $\Pi_{t\mid t}$ accordingly.
Finally, the optimal MSE estimate is provided by
\begin{equation}\label{eq:BKF:xhat}
    \hat{x}_t=\overline{X\Pi_{t\mid t}},
\end{equation}
where the overline operator $\overline{\cdot}$ is the element-wise rounding operator.
The drawback of the BKF is that all three components, $\Pi_{t\mid t}$,$M_t$,$T_t$, are all exponentially sized, namely at the magnitude of
$2^n$.
This automatically implies the BKF 
algorithm is not feasible
for sufficiently large
networks.

We now provide the full details of $M_t(a_{t-1})$ and $T_t(y_t)$, beginning with $M_t(a_{t-1})$.
Given state $x^j\in X$, the probability that node $l$ is compromised in the next time step is given by 
\begin{equation}
\begin{aligned}
    \eta^j_l=&\bigg(1+a_{t-1,l}(\alpha-1)\bigg)\\
    &\quad\bigg(x^j_l+(1-x^j_l)\bigg[1-\prod_{r\in D_l}1-\rho_{rl}x^j_r\bigg]\bigg)
\end{aligned}
\end{equation}
which directly follows the dynamics \eqref{eq:xt dynamics}, while taking into account cleaning $a_{t-1}$ and the previous state of node $x^j_l$.
We leverage $\eta^j_l$ to define the column stochastic matrix $M_t$, which is the transition matrix that defines the hidden Markov chain over the $2^n$ states.
Specifically, entry $(M_t)_{ij}$ is the probability that state $x^j$ will transition to $x^i$, namely:
\begin{equation}\label{eq:def Mt}
\begin{aligned}
    (M_t)_{ij}&=\Pr(x_t=x^i\mid x_{t-1}=x^j,a_{t-1}) \\
    &=\prod^n_{l=1}\bigg(\eta^j_l x^i_{t,l}+(1-\eta^j_l)(1-x^i_{t,l}) \bigg).
\end{aligned}
\end{equation}
The matrix $M_t$ fully describes the expected underlying dynamics of the hidden Markov model, but observation $y_t$ is available to improve the estimate.
To leverage $y_t$, we construct vector $T_t\in[0,1]^{2^n}$ such that
\begin{equation}\label{eq:def Tt}
\begin{aligned}
    (T_t(y_t))_{i}&=\Pr(y_t\mid x_t=x^i)\\
    &=\prod^n_{l=1}\Pr(y_{t,l}\mid x_t=x^i)\\
    &=\prod^n_{l=1}\bigg[y_{t,l}\bigg(px^i_{l}+(1-q)(1-x^i_{l})\bigg)\\
    &\qquad+(1-y_{t,l})\bigg(q(1-x^i_l)+(1-p)x^i_l\bigg)\bigg]
\end{aligned}
\end{equation}
which is the probability that observation $y_t$ occurred given that the true state was $x^i$.


\section{Approach}
\subsection{An Approximate Approach}

To remedy the computational complexity of the BKF algorithm, we present a polynomial-time approximation based on Bayes' Law conditioned on the belief that each node is
compromised.
Particularly, we propose tracking only the belief $P_t\in [0,1]^n$ that corresponds to the probability that each node is compromised.
This corresponds to the product $X\Pi_{t\mid t}$, which has the same interpretation.
In particular, we use Bayes' Law to condition on the fact that we only have belief $P_t\in [0,1]^n$ as the prior instead of
$\Pi_{t\mid t}\in\Delta(X)$.
The dynamics can then be given by
\begin{equation}\label{eq:bayes update}
\begin{aligned}
    &P_{t+1,l}=\Pr(x_{t+1,l}=1\mid P_{t},a_t,y_{t+1})=\\
    &\begin{cases}
        \frac{\Pr(y_{t+1,l}=1\mid P_t,a_t,x_{t+1}=1)\Pr(x_{t+1,l}=1\mid P_t,a_t)}{Pr(y_{t+1,l}=1\mid P_t,a_t)} &y_{t+1,l}=1\\
        \frac{(1-\Pr(y_{t+1}=1\mid P_t,a_t,x_{t+1}=1))\Pr(x_{t+1,l}=1\mid P_t,a_t)}{(1-Pr(y_{t+1,l}=1\mid P_t,a_t))} &y_{t+1,l}=0
    \end{cases}
\end{aligned}
\end{equation}
via Bayes' Law.
It then suffices to derive each of the three terms as a function of $P_t,a_t,y_{t+1}$ to complete our derivation of the MFA estimator.
The first term of the numerator is easy to show directly, $\Pr(y_{t+1,l}=1\mid P_t,a_t,x_{t+1}=1)=p$;
then, by using this result and the law of total probability, the denominator can be solved as
\begin{equation*}\label{eq:vt+1=1|u_t=0}
\begin{aligned}
    \Pr(y_{t+1,l}&=1\mid P_t,a_{t})
    =p\Pr(x_{t+1,l}=1\mid P_t,a_t)\\
    &
    \quad+\tilde{q}(1-\Pr(x_{t+1,l}=1\mid P_t,a_t))
\end{aligned}
\end{equation*}
which is described using
the remaining term $\Pr(x_{t+1,l}=1\mid P_t,a_t)$.
This term is potentially difficult to compute
because many distributions $\Pi\in \Delta(X)$ can be consistent with $P_t\in[0,1]^n$, which we denote by set $S(P_t)=\{\Pi\in \Delta(X)\mid X\Pi=P_t\}$.
This raises two questions. First, evaluating $\Pr(x_{t+1,l}=1\mid P_t,a_t)$ likely requires finding a distribution over $S(P_t)$, rather than a single distribution $\Pi$.
It is conceivable that novel dynamics would need to be derived because such a distribution would evolve given $y_{t+1},a_{t}$ and dynamics \eqref{eq:xt dynamics}.
Second, to numerically evaluate that term would require sampling $S(P_t)$ with the desired distribution, which itself is non-trivial owing to the fact that $X\Pi_{t|t}=P_t$ defines a system of $n$ equations that have $2^n$ variables.

Regardless, it is likely any computation over the whole set of $S(P_t)$ would take place in exponential time, which is also true for
computing $X\Pi_{t|t}$, leading to the presumption that directly evaluating $\Pr(x_{t+1,l}=1\mid P_t,a_t)$ is not 
tractable.


\subsection{The Mean Field Analysis Estimator}
\begin{algorithm}
\caption{The Mean Field Analysis Algorithm\label{algo:MFA}}
\KwData{$P_0\in [0,1]^n$}
\KwResult{$\hat{x}_1,\hat{x}_2,\hat{x}_3,\cdots$}
\For{t=1,2,\dots}{
\For{$l\in V$}{
        \nl $P'_{t,l}\gets (1+(\alpha-1)a_{t,l})$\\
    $\bigg( P_{t-1,l}+(1-P_{t-1,l})\bigg[1-\prod_{k\in D_l} (1-\rho_{kl}P_{t-1,k})\bigg]\bigg)$\label{eq:alg:MFA eq}\;
        \nl $P_{t,l}\gets\frac{py_{t,l}P'_{t,l}+(1-y_{t,l})\tilde{p}P'_{t,l}}{y_{t,l}[pP'_{t,l}+\tilde{q}(1-P'_{t,l})]+(1-y_{t,l})[\tilde{p}P'_{t,l}+q(1-P'_{t,l})]}$\label{eq:alg:bayes}\;
}
\nl $\hat{x}_t\gets \overline{P_{t}}$\label{eq:alg:estimator}\;
}
\end{algorithm}

Given that we cannot directly evaluate Bayes' Law with belief $P_t$, the main premise of our estimator is to leverage the Mean Field Analysis (MFA) equation to provide a tractable equation in place of $\Pr(x_{t+1,l}=1\mid P_t,a_t)$.
The MFA equations have long been proposed for a similar purpose in epidemic modeling (see, e.g. \cite{paarporn2017networked,nowzari2016analysis}), where each node is modeled individually.
The dynamics are given 
to \eqref{eq:xt dynamics}, with the main difference that in epidemic modeling, there is no notion of cleaning $a_t$, 
but rather some natural recovery process is assumed.
However, both models lead to dynamics similar to \eqref{eq:attack update} where there exists an underlying $2^n$ state Markov chain that is difficult to analyze.
Thus, the so-called MFA equation has been proposed, given in Line~\ref{eq:alg:MFA eq} of Algorithm~\ref{algo:MFA}.
Particularly, it can be seen that the MFA equation is structurally similar to dynamics \eqref{eq:xt dynamics} with the exception that states $x_t$ have been replaced with belief $P_t$.
With respect to networked 
models,  there have been considerable numerical efforts to evaluate the performance of Mean Field estimators \cite{gleeson2012accuracy,silva2024accuracy}, and it has been shown via two prominent epidemic models \cite{armbruster2017elementary,armbruster2017elementary2} that in a special case, the variations of the MFA dynamics do converge to the true dynamics.

We now provide a high-level description of the proposed Algorithm~\ref{algo:MFA} that aims to approximate the results of
the BKF Algorithm~\ref{algo:BKF}.
Both algorithms function in 3 sub-steps (Line 1-3) at each iteration; each step of the MFA algorithm functions analogously to the BKF.
Note that the MFA algorithm requires an extra loop over $V$ as the matrix products can no longer be used to do that implicitly.
Further, it can be seen that Line~\ref{eq:alg:MFA eq} utilizes the MFA equation to estimate the expected dynamics \eqref{eq:xt dynamics}, given prior belief $P_{t-1}$.
This produces an intermediate distribution $P'_{t}$, which is then updated in Line~\ref{eq:alg:bayes} with respect to the received observation $y_t$ utilizing Bayes' law as given in \eqref{eq:bayes update}, producing the new belief $P_t$.
Finally, in Line~\ref{eq:alg:estimator}, we round the posterior distribution $P_t$ in an element-wise fashion to obtain the MFA estimator $\hat{x}_t$.

It should be noted that, at each time step, the algorithm requires an operation over each node $l\in V$, and within that loop, a loop over all in-neighbors of $l$.
Thus, the MFA algorithm can be pessimistically regarded as running in $\mathcal{O}(nD^*)$ time, where $D^*=\max_{l\in V}|D_l|$ is the maximum number of in-neighbors among all nodes.
Additionally, it can be seen that our MFA algorithm can be seen as a \emph{distributed} algorithm in the sense that, for any node $l$ to compute their next estimate $P_{t+1,l}$, they only require parameters relating to their in-neighbors $D_l$. 

\section{Results}
Although justified in the context of epidemic models, 
using the MFA equation to evaluate an approximate Bayes' law is not
well justified yet in the context of 
state estimation. To address this gap, we now prove
that all three steps of the MFA algorithm can be derived from the BKF via a special case of $\Pi\in S(P)$.
To see this, we revisit the 
attempt to compute (rather estimate) $\Pr(x_{t+1,l}=1\mid P_t,a_t)$.
As discussed previously, computing over a distribution
over
$S(P)$ is difficult and is likely to take exponential time.
Thus, one approach to estimating this probability is to select a single $\Pi\in S(P)$, but how to pick?
A disciplined answer is the distribution $\Pi\in S(P)$ that maximizes entropy.
Formally, given $P\in[0,1]^n$, we define
\begin{equation}\label{eq:Pistar def}
    \Pi^*(P)\in\argmax_{\Pi\in S(P)} \sum_{i}-\Pi_i\ln \Pi_i,
\end{equation}
where $\Pi^*(P)$ is the distribution that maximizes the entropy, constrained to be consistent with belief $P$.
To justify why the maximum entropy distribution is a disciplined choice, we appeal to the Maximum Entropy Principle.
This principle addresses a common problem in statistical estimation, where many possible underlying distributions are consistent with available information.
The authors \cite{guiasu1985principle} describe the appeal of selecting the maximum entropy distribution as \emph{``what we need is a probability distribution which ignores no possibility
subject to the relevant constraints.''}
The intuition is as follows: although it would be more desirable to have a lower entropy distribution, there is no way to select the correct lower entropy distribution consistent with the observations.

Before we present the results on deriving each step of the MFA algorithm,
we first give a theorem that derives $\Pi^*(P)$ in closed form.
\begin{theorem}\label{thm:max entropy}
    Given any belief $P \in [0,1]^n$, the distribution \\ $\Pi^*(P) \in \Delta(X)$ defined by
    \begin{equation}
        \Pi^*(P)_i=\prod_{l\in V} ( P_lx^i_l+(1-x^i_l)(1-P_l) )
    \end{equation}
    for all $x^i \in X$ is the unique distribution that solves \eqref{eq:Pistar def}.
\end{theorem}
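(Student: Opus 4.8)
The plan is to recognize this as a standard maximum-entropy problem with linear marginal constraints and to verify the claimed closed form by a Gibbs-inequality (relative-entropy) argument rather than by grinding through Lagrange conditions. First I would recast the constraint $\Pi \in S(P)$: writing $Z=(Z_1,\dots,Z_n)$ for the random state drawn from $\Pi$, the condition $X\Pi = P$ says exactly that $\Pr(Z_l = 1) = P_l$ for every $l$, i.e.\ each coordinate marginal of $\Pi$ is $\mathrm{Bernoulli}(P_l)$. The feasible set $S(P)$ is nonempty (the claimed $\Pi^*(P)$ is a witness), closed, bounded, and convex, and $\Pi \mapsto -\sum_i \Pi_i \ln \Pi_i$ is strictly concave on it, so a maximizer exists and is unique; the whole task is therefore to identify it with the stated product distribution.

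Next I would check that $\Pi^*(P)$ as defined is genuinely feasible: it is nonnegative, $\sum_{x\in\{0,1\}^n}\prod_{l}(P_lx_l+(1-x_l)(1-P_l)) = \prod_l\big(P_l+(1-P_l)\big)=1$, and summing against $x^i_l$ factorizes to give marginal $P_l$, so $\Pi^*(P)\in S(P)$. The key step is then the identity, valid for every $\Pi\in S(P)$,
\[
  -\sum_i \Pi_i \ln \Pi^*(P)_i \;=\; -\sum_{l\in V}\big(P_l\ln P_l + (1-P_l)\ln(1-P_l)\big),
\]
which follows because $\ln\Pi^*(P)_i = \sum_l\big(x^i_l\ln P_l + (1-x^i_l)\ln(1-P_l)\big)$, and then interchanging the sums and applying the marginal constraints $\sum_i\Pi_i x^i_l = P_l$ and $\sum_i\Pi_i(1-x^i_l)=1-P_l$. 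The right-hand side does not depend on $\Pi$; evaluating it at $\Pi = \Pi^*(P)$ shows it equals $H(\Pi^*(P))$. Hence for any $\Pi\in S(P)$,
\[
  H(\Pi) \;=\; -\sum_i\Pi_i\ln\Pi^*(P)_i \;-\; \sum_i \Pi_i\ln\frac{\Pi_i}{\Pi^*(P)_i} \;=\; H(\Pi^*(P)) - D\!\left(\Pi \,\big\|\, \Pi^*(P)\right),
\]
where $D(\cdot\,\|\,\cdot)$ is the Kullback--Leibler divergence. Since $D(\Pi\,\|\,\Pi^*(P))\ge 0$ with equality iff $\Pi=\Pi^*(P)$, the distribution $\Pi^*(P)$ is the unique solution of \eqref{eq:Pistar def}.

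The only place that needs genuine care is the boundary case where some $P_l\in\{0,1\}$, which makes $\ln P_l$ or $\ln(1-P_l)$ infinite. There I would observe that $P_l=0$ forces $\Pi_i=0$ on every state with $x^i_l=1$ (and symmetrically for $P_l=1$), so both $\Pi$ and $\Pi^*(P)$ are supported on the same face of the simplex, all the offending terms carry a $0\ln 0$ that is taken to be $0$ by the usual convention, and the displayed identities go through verbatim once the sums are restricted to that common support. I expect this bookkeeping --- together with a clean statement of the equality condition in Gibbs' inequality --- to be the main (and essentially only) obstacle; the rest is the routine factorization already used above. An alternative route, which I would mention but not pursue, is to form the Lagrangian $-\sum_i\Pi_i\ln\Pi_i + \lambda(\sum_i\Pi_i-1)+\sum_l\mu_l(\sum_i\Pi_i x^i_l - P_l)$, deduce $\Pi_i \propto \exp(\sum_l\mu_l x^i_l)$ from stationarity, and match constants to recover the product form directly; this requires separately arguing that no maximizer sits on the boundary $\Pi_i=0$, which the relative-entropy argument sidesteps.
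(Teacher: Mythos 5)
Your proof is correct, but it takes a genuinely different route from the paper. The paper casts \eqref{eq:Pistar def} as minimization of negative entropy subject to linear constraints, verifies Slater's condition, writes down the full Lagrangian with multipliers $\lambda_i,\nu_0,\nu_l$, assumes $\lambda_i=0$, solves the stationarity condition to get $\Pi_i\propto\prod_l e^{-\nu_l x^i_l}$, and then explicitly computes $\nu_r=-\ln\frac{P_r}{1-P_r}$ and the normalizer to recover the product form. You instead verify feasibility of the candidate directly and use the Gibbs/relative-entropy identity $H(\Pi)=H(\Pi^*(P))-D(\Pi\,\|\,\Pi^*(P))$, which holds for every $\Pi\in S(P)$ because the cross-entropy $-\sum_i\Pi_i\ln\Pi^*(P)_i$ depends only on the marginals $P$; nonnegativity of KL divergence with its equality condition then gives both optimality and uniqueness in one stroke. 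Your route is shorter and buys two things the paper's argument handles less cleanly: it avoids the implicit interiority assumption $\lambda_i=0$ (the paper never argues that the minimizer has full support, which is false when some $P_l\in\{0,1\}$), and your explicit treatment of the boundary case --- restricting to the common support forced by $P_l\in\{0,1\}$ and invoking the $0\ln 0=0$ convention --- closes exactly that gap. What the paper's KKT derivation buys in exchange is that it \emph{derives} the exponential-family/product form from first principles rather than verifying a guessed candidate, which is pedagogically useful if the closed form were not already in hand. One minor redundancy in your write-up: the opening existence-and-uniqueness argument via compactness and strict concavity is superfluous, since the KL identity already certifies $\Pi^*(P)$ as the unique maximizer.
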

Theorem \ref{thm:max entropy} 
states that the distribution $\Pi^*(P)$ 
that maximizes entropy while being consistent with the belief $P$ is the product distribution over $X$ induced by $P$.
Due to space constraints we defer the proof to an online version \cite{collins2025efficient}.
Using this form, we now evaluate $\Pr(x_{t+1,l}=1\mid \Pi^*(P_t),a_t)=XM_{t+1}\Pi^*(P_t)$, which is Line~\ref{eq:alg:MPi} of the BKF Algorithm~\ref{algo:BKF}.
\begin{theorem}\label{thm:XMPi}
Let $M_t$ be constructed appropriately via \eqref{eq:def Mt} for any $a_{t-1}$. Then, $\Pr(x_{t+1,l}=1\mid \Pi^*(P_t),a_t)$ is computed explicitly in $\mathcal{O}(n)$ via
    \begin{equation}\label{eq:Pt dynamics}
    \begin{aligned}
        &\Pr(x_{t+1,l}=1\mid \Pi^*(P_t),a_t)=P'_{{t+1},l}=(X M_{t+1} \Pi^*(P_t))_l\\
        &=(1+(\alpha-1)a_{t,l})\\
    &\qquad \left( P_{t,l}+(1-P_{t,l})\bigg[1-\prod_{k\in D_l} (1-\rho_{kl}P_{t,k})\bigg]\right).
    \end{aligned}
    \end{equation}
\end{theorem}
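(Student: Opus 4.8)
The plan is to unfold the matrix--vector expression $(X M_t \Pi^*(P_t))_l$ into a sum over states, exploit the coordinate-wise product structure of $M_t$ from \eqref{eq:def Mt} to collapse it to a single marginal expectation, and then use the fact (Theorem~\ref{thm:max entropy}) that $\Pi^*(P_t)$ is a product distribution to evaluate that expectation in closed form.

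Concretely, I would first write
\begin{align*}
(X M_t \Pi^*(P_t))_l &= \sum_{i} x^i_l\,(M_t \Pi^*(P_t))_i \\
&= \sum_{j} \Pi^*(P_t)_j \Big(\sum_{i} x^i_l\,(M_t)_{ij}\Big),
\end{align*}
and observe that the inner sum is the $l$-th marginal of the transition kernel out of state $x^j$. Using the product form $(M_t)_{ij} = \prod_{m=1}^{n}\big(\eta^j_m x^i_m + (1-\eta^j_m)(1-x^i_m)\big)$, the sum over all $x^i \in \{0,1\}^n$ factorizes: every coordinate $m \neq l$ contributes $\eta^j_m + (1-\eta^j_m) = 1$, while coordinate $l$ contributes $\eta^j_l$ (the $x^i_l = 0$ term vanishes). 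Hence $\sum_{i} x^i_l (M_t)_{ij} = \eta^j_l$, and therefore $(X M_t \Pi^*(P_t))_l = \sum_{j} \Pi^*(P_t)_j\, \eta^j_l = \mathbb{E}_{x \sim \Pi^*(P_t)}[\eta^{x}_l]$, i.e.\ the average next-step compromise probability of node $l$ under the prior.

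Next I would invoke Theorem~\ref{thm:max entropy}: under $\Pi^*(P_t)$ the node states $x_1,\dots,x_n$ are independent Bernoulli variables with $\mathbb{E}[x_l] = P_{t,l}$. Substituting the explicit form of $\eta^{x}_l$, the deterministic factor $(1+(\alpha-1)a_{t,l})$ pulls out of the expectation; linearity then handles the $x_l + (1-x_l)\big[1 - \prod_{r \in D_l}(1-\rho_{rl}x_r)\big]$ part, independence of $\{x_r\}_{r\in D_l}$ lets me factor $\mathbb{E}\big[\prod_{r\in D_l}(1-\rho_{rl}x_r)\big] = \prod_{r \in D_l}(1-\rho_{rl}P_{t,r})$, and independence of $x_l$ from its in-neighbors' states lets me split $\mathbb{E}[(1-x_l)(\cdots)] = (1-P_{t,l})\,\mathbb{E}[(\cdots)]$. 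Collecting terms yields exactly \eqref{eq:Pt dynamics}, and the $\mathcal{O}(n)$ claim is immediate since the resulting expression is a single product over $D_l$ (at most $n-1$ factors) plus $O(1)$ further arithmetic.

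The one place requiring care --- and the crux of why the theorem holds --- is the interchange of expectation with the product over the in-neighbor set: this is valid precisely because the maximum-entropy distribution is a product measure with independent coordinates, and it would fail for a generic $\Pi \in S(P_t)$ carrying correlations between neighbors. I would also note the (standard) assumption that $G$ has no self-loops, i.e.\ $l \notin D_l$, so that $x_l$ is genuinely independent of the collection $\{x_r\}_{r \in D_l}$ appearing in the product; otherwise the bookkeeping in the final step changes slightly. Beyond that, the computation is routine algebra.
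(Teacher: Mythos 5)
Your proof is correct, and its first half is identical to the paper's: both unfold $(XM_t\Pi^*(P_t))_l$ into $\sum_j \Pi^*(P_t)_j \sum_i x^i_l (M_t)_{ij}$ and use the product form of \eqref{eq:def Mt} to collapse the inner sum to $\eta^j_l$, reducing the problem to $\mathbb{E}_{x\sim\Pi^*(P_t)}[\eta^x_l]$. Where you diverge is in how that expectation is evaluated. The paper expands $\prod_{k\in D_l}(1-\rho_{kl}x_k)$ by inclusion--exclusion over the power set of $D_l$ and applies Lemma~\ref{thm:factoring lemma} (monomial moments of $\Pi^*(P)$ factor: $\sum_i \Pi_i\prod_{l\in\sigma}x^i_l = \prod_{l\in\sigma}P_l$) to each resulting monomial before recollecting the sum back into a product. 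You instead observe that Theorem~\ref{thm:max entropy} exhibits $\Pi^*(P)$ as a product measure with independent Bernoulli coordinates, so the expectation of a product of functions of disjoint coordinates factors directly; this subsumes Lemma~\ref{thm:factoring lemma} and avoids the combinatorial expansion entirely, making the second half shorter and arguably more transparent about \emph{why} the result holds (independence is exactly what a generic $\Pi\in S(P_t)$ lacks, as you note). Your explicit flag that $l\notin D_l$ is needed so that $x_l$ is independent of $\{x_r\}_{r\in D_l}$ addresses a point the paper leaves implicit --- its recollection of terms into $P_r\prod_{l\in D_r}(1-\rho_{lr}P_l)$ would likewise require care under self-loops --- so this is a small clarification rather than a gap.
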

The proof is again deferred to the online version \cite{collins2025efficient}. 
%
Next, we proceed directly to Line~\ref{eq:alg:TPi}, 
showing that it may also be derived from the BKF, given $\Pi_{t\mid t-1}=\Pi^*(P'_{t})$. 
\begin{theorem}\label{thm:XTPi}
    Given a distribution $\Pi_{t\mid t-1}$ such that $\Pi_{t\mid t-1}=\Pi^*(P)$ for some $P'_t\in[0,1]^n$, then \eqref{eq:observation update} is computed in $\mathcal{O}(n)$ is computed via 
    \begin{equation}
    \begin{aligned}
        &P_{t,l}=(X\Pi_{t\mid t})_l=\bigg(X\frac{T_t\circ \Pi_{t\mid t-1}}{(T_t)^\top\Pi_{t\mid t-1}}\bigg)_l=\\
        &\frac{py_{t,l}P'_{t,l}+(1-y_{t,l})\tilde{p}P'_{t,l}}{y_{t,l}[pP'_{t,l}+\tilde{q}(1-P'_{t,l})]+(1-y_{t,l})[\tilde{p}P'_{t,l}+q(1-P'_{t,l})]}.
    \end{aligned}
    \end{equation}
\end{theorem}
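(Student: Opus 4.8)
The plan is to exploit the \emph{product structure} shared by the likelihood vector $T_t(y_t)$ of \eqref{eq:def Tt} and the maximum-entropy prior $\Pi^*(P)$ of Theorem~\ref{thm:max entropy}, which causes the normalized element-wise product in \eqref{eq:observation update} to factorize coordinate-by-coordinate. Write the input belief as $P=P'_t$, so that $\Pi^*(P)_i=\prod_{l=1}^n\big(P_l x^i_l+(1-x^i_l)(1-P_l)\big)$ and $(T_t(y_t))_i=\prod_{l=1}^n\Pr(y_{t,l}\mid x^i_l)$. First I would, for each node $l$, define the two scalars $g_l(1)=\Pr(y_{t,l}\mid x_l=1)\,P_l$ and $g_l(0)=\Pr(y_{t,l}\mid x_l=0)\,(1-P_l)$ obtained by setting $x^i_l=1$ or $0$ in the single-node factor of $(T_t\circ\Pi^*(P))_i$; reading $\Pr(y_{t,l}\mid x^i_l)$ off \eqref{eq:def Tt} gives $g_l(1)=y_{t,l}\,p\,P_l+(1-y_{t,l})\,\tilde p\,P_l$ and $g_l(0)=y_{t,l}\,\tilde q\,(1-P_l)+(1-y_{t,l})\,q\,(1-P_l)$.

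The core step is the identity $(T_t\circ\Pi^*(P))_i=\prod_{l=1}^n g_l(x^i_l)$, which expresses every quantity in \eqref{eq:observation update} as a sum over all $x^i\in X=\{0,1\}^n$ of a product of per-coordinate terms. Applying the distributive law (``a sum of products is a product of sums'') gives the normalizing constant $(T_t)^\top\Pi^*(P)=\sum_{x\in\{0,1\}^n}\prod_{l=1}^n g_l(x_l)=\prod_{l=1}^n\big(g_l(0)+g_l(1)\big)$, while for the numerator of coordinate $l$ of $X\Pi_{t\mid t}$ the factor $x^i_l$ selects only the $x_l=1$ branch in coordinate $l$, so that $\sum_{x}x_l\prod_{k=1}^n g_k(x_k)=g_l(1)\prod_{k\ne l}\big(g_k(0)+g_k(1)\big)$. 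Dividing, the product over $k\ne l$ cancels and $(X\Pi_{t\mid t})_l=g_l(1)/\big(g_l(0)+g_l(1)\big)$. Substituting the explicit $g_l(0),g_l(1)$ from the previous paragraph and collecting the $y_{t,l}$ and $1-y_{t,l}$ terms reproduces exactly the right-hand side of the theorem with $P'_{t,l}$ in place of $P_l$; the $\mathcal{O}(n)$ bound then follows, since each coordinate costs $\mathcal{O}(1)$ once $P'_t$ is in hand.

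I do not expect a genuine obstacle, as this is the same factorization argument used to establish Theorem~\ref{thm:XMPi}. The one point that needs care is verifying that the normalizing constant factorizes into \emph{exactly the same} per-coordinate factors $g_k(0)+g_k(1)$ that appear in the numerator, so that every coordinate $k\ne l$ cancels cleanly; this cancellation is what collapses the Bayes update to the one-dimensional rational expression in the statement and is the step most prone to bookkeeping slips. A secondary detail is checking that $\Pi_{t\mid t}$ is well defined, i.e.\ that $(T_t)^\top\Pi^*(P)=\prod_l\big(g_l(0)+g_l(1)\big)\ne 0$, which holds under the mild nondegeneracy conditions on $p,q,\tilde p,\tilde q$ and on $P$ that are implicit throughout.
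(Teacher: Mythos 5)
Your proposal is correct and follows essentially the same argument as the paper: both exploit the product structure of $T_t\circ\Pi^*(P)$ to factor the coordinate-$r$ term out of the numerator and the normalizing constant and cancel the remaining common factor. The only cosmetic difference is that you fully factorize $(T_t)^\top\Pi^*(P)$ as $\prod_l(g_l(0)+g_l(1))$, whereas the paper pairs states in $X_{r1}$ with states in $X_{r0}$ and cancels the shared sum $\sum_{x^i\in X_{r1}}\prod_{l\in V\setminus r}F_l(y)$ without expanding it.
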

Again, we defer the proof to \cite{collins2025efficient}.
Interestingly, this indicates that the Bayesian update of the observation is equivalent to the BKF's observation update in this context.

These theorems provide more formal evidence that the idea of using Bayes' law given the belief $P_t,P'_t$ and leveraging the maximum entropy distribution $\Pi^*(P)$ is indeed closely related to the actual function of the BKF algorithm.
The performance of the MFA algorithm thus depends on how well the BKF function performs when restricted to maximum entropy distributions.
The unrestricted variant is capable of computing lower entropy representative distributions, which may potentially lead to better estimates of $\hat{x}_t$.
This distinction can be regarded as the precise loss of information from storing only $n$ real numbers, in contrast to
storing the full distribution over $2^n$.

It is important to note that although the MFA approaches can follow each step of the BKF algorithm individually, it is typically true that $\Pi_{t|t-1}\neq \Pi^*(X\Pi_{t|t-1})$. This implies that even if the prior $\Pi_{t-1|t-1}=\Pi^*(P_{t-1})$ maximized entropy, the assumption of Theorem~\ref{thm:XTPi} is 
not be satisfied and often 
$P_t\neq X\Pi_{t|t}$.

\subsection{Simulation Results}
We now provide simulations to verify the effectiveness of the MFA algorithm based on the intuition of Theorems~\ref{thm:XMPi} and \ref{thm:XTPi}.
To evaluate the algorithms, we define a \emph{True Estimation Rate} metric, $\textrm{TER}(x_t,\hat{x}_t)=1-(||x_t-\hat{x}_t||^2_2)/n$, which is the fraction of nodes whose states are correctly estimated.
Throughout (unless otherwise stated), the parameters are $\alpha=0.2,p=0.8,q=0.8,\rho_{ij}=0.1$, and we run each experiment for 20 time steps in 100 independent random trials.
We assume that the first node is the external threat that is known to be compromised and cannot be cleaned (we also exclude this node for evaluation purposes).
The initial state is then $x_0=(1,0,0,0,\dots)$; the initial beliefs of the algorithms are $P_0=(1,0.5,0.5,\dots)$ and $\Pi_0=\Pi^*(P_0)$ for MFA and BKF, respectively.
We chose this initial state and beliefs to reflect that the network begins clean (with the exception of source node 0) but the defenders assume they have no information about the network and choose to initialize it to the uniform random  distribution.
Using this choice, the experiments can demonstrate the speed of convergence of the algorithms from an uninformed initialization to accurate estimations.
Additionally, at each time step, we assume 2 nodes are randomly selected for cleaning (excluding node 1).

\begin{figure}[!htbp]
    \centering
\includegraphics[scale=0.1892]{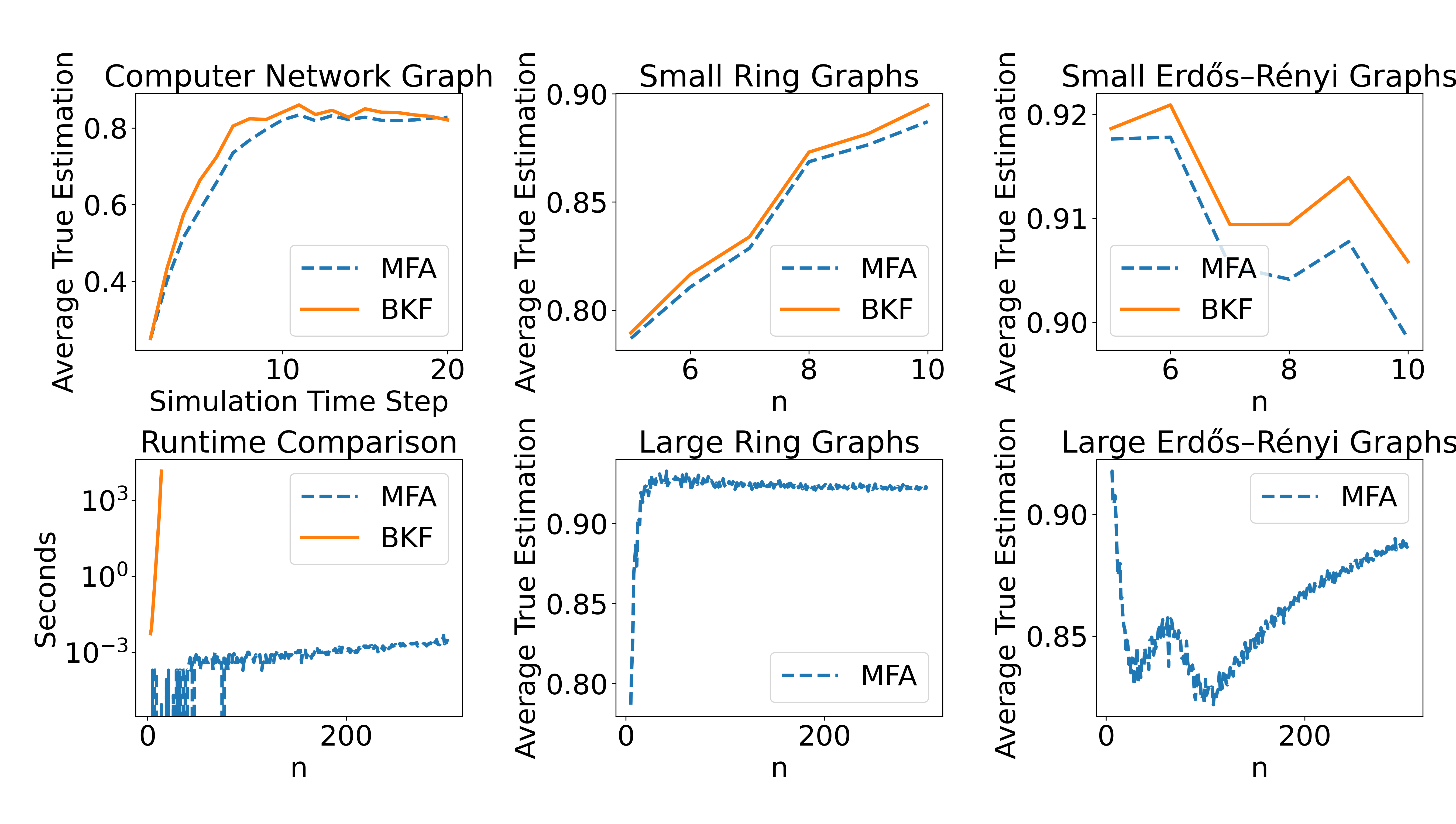}
\caption{Performance Comparison between the MFA and BKF Estimators. On the top-left we observe a comparative performance between the algorithms on the Computer Network Graph, and in the bottom left we compare runtimes on ring graphs.
In the center plots, we observe how the performance on ring graphs varies with the size of network. In the right plots, we report the same experiment but on Erd\H{o}s-Rényi graphs.}
    \label{fig:TD}
\end{figure}

To compare the performance between the MFA and BKF algorithms in terms of TER and runtime, we provide six experiments shown in Figure~\ref{fig:TD}.
In the top-left plot, we evaluate the True Estimation Rate at each timestep between both algorithms via an 11-node network proposed in \cite{kazeminajafabadi2024optimal}, which we term the \emph{Computer Network Graph}.
We take $p=0.2,q=1$ to match that work as closely as possible, noting that all of our nodes would be considered OR nodes (we did not implement their notion of AND nodes in this work) and that we include the external attacker as a node (making it an 11-node graph).
We observe that MFA follows BKF relatively closely at all time steps; given that the BKF algorithm is optimal in expectation, it is possible for the MFA algorithm to occasionally outperform BKF on specific sample paths.

The lower-left plot of Figure~\ref{fig:TD} is a run-time comparison between the two algorithms on ring graphs.
In the plot, $n$ denotes the number of nodes and the average runtime of 5 independent simulation runs is recorded for 20 time steps each.
As expected, the BKF algorithm is known to take exponential time (and we were able to compute up to $n=14$ for BKF), which is verified by the exponential curve in the plot.
It should be noted that for systems not much larger than the ones considered (e.g., a network with 30 nodes), the BKF estimator will no longer be practical, while the MFA estimator is.

The two middle plots present the changes in average performance on ring graphs of variable size, where TER is averaged across both 100 trials and all 20 time steps.
In the top-center plot, we evaluate $n\in \{5,\dots,10\}$ and observe that MFA and BKF perform very closely. In the bottom-center plot, we forgo computing BKF and plot MFA's performance on $n\in \{5,\dots,300\}$.
Interestingly, we observe a dramatic increase in performance as the number of nodes increases on small ring graphs, then leveling out around 0.93 for graphs of sufficient size.

In the two rightmost plots, we present the same study as the middle plots, but on Erd\H{o}s-Rényi (ER) graphs.
For these experiments, we define ER graphs where each edge exists with probability $0.2$, and we generate a new graph for each trial.
In the top plot, we observe that MFA and BKF follow each other closely;
in the bottom plot, MFA's performance is the strongest on small graphs (which may be easy to predict because they have
few edges).
Then, we observe a small increase in performance for certain moderately sized graphs, before the performance drops again.
Finally, as graphs grow large TER steadily improves again.
This suggests that certain edge densities may be more favorable for estimation than others, due to the fixed edge probability across all ER graph sizes.


\section{Conclusion}
We have presented a tractable algorithm for state estimation in a networked FlipIt model.
Our theoretical and simulation results show that our proposed method approximates the proven optimal BKF approach well. 
One interesting open problem is to prove bounds on the accuracy of this estimator.
\bibliographystyle{ieeetr}
\bibliography{references}

\online{
\section{Appendix}
\label{sec:proofs}
We now provide proofs for Theorems~\ref{thm:max entropy}, \ref{thm:XMPi}, \ref{thm:XTPi}.
Throughout, let $X_{r1} = \{x^i \in X \mid x^i_r = 1\}$ and $X_{r0} = \{x^i \in X | x_r = 0\}$.

\subsection*{Proof of Theorem~\ref{thm:max entropy}}
\begin{proof}
Let $P\in[0,1]^n$. We cast problem \eqref{eq:Pistar def}  as the following equivalent minimization problem:

\begin{align}
    &\min_{\Pi \in \mathbb{R}^{2^n}} \sum_i \Pi_i\ln \Pi_i \\
    &\textrm{ s.t.}\quad -\Pi_i \leq 0 \quad \forall i \label{eq:LM:nonnegative constraint} \\
    &\phantom{\textrm{ s.t.}\quad}-1+\sum \Pi_i=0 \label{eq:LM:probability constraint}\\
    &\phantom{\textrm{ s.t.}\quad}(X\Pi)_l-P_l=0\quad \forall l\in V. \label{eq:LM:P constraint}
\end{align}
We observe that the objective is the negative entropy function, which is known to be strongly convex, differentiable, and bounded from below. Moreover, the constraints constitute a collection of linear equalities and inequalities, and thus the feasible set is closed and convex. Therefore, this problem has a unique minimizer $\Pi^*$. 

We notice that Slater's constraint qualification is also satisfied in this instance. Hence, we can determine the unique minimizer by finding a solution to the KKT conditions. In addition to constraints \eqref{eq:LM:nonnegative constraint}-\eqref{eq:LM:P constraint}, the KKT conditions also consist of complementary slackness,
\begin{equation}
    \lambda_i \Pi_i = 0, \quad \forall i,
\end{equation}
dual feasibility $\lambda_i \geq 0$, and the zero-gradient condition $\nabla_{\Pi} L(\Pi,\lambda,\nu) = 0$, where the Lagrangian is written as
\begin{equation}
    \begin{aligned}
        L(\Pi,\lambda,\nu)&= \sum_i \Pi_i\ln \Pi_i - \sum_i \lambda_i \Pi_i \\
        &\quad+\nu_0(-1+\sum_{i} \Pi_i) + \sum_{l\in V}\nu_l(-P_l+\sum_i x_l^i\Pi_i).
    \end{aligned}
\end{equation}
Here, we use $\lambda_i$ as the multiplier associated with each inequality constraint of \eqref{eq:LM:nonnegative constraint}, $\nu_0$ associated with \eqref{eq:LM:probability constraint}, and $\nu_\ell$ associated with each constraint of \eqref{eq:LM:P constraint}. Each element of the gradient is given by
\begin{equation}
    \frac{\partial L}{\partial \Pi_i}=1+\ln\Pi_i - \lambda_i +\nu_0 +\sum_l \nu_lx^i_l.
\end{equation}
We set $\frac{\partial L}{\partial \Pi_i}=0$ and solve for $\Pi_i$, obtaining
\begin{equation}\label{eq:LM:Pi}
    \Pi_i= e^{\lambda_i} e^{-\nu_0-1}\prod_{l\in V}e^{-\nu_lx^i_l}.
\end{equation}
Here, we will suppose (which is standard) that $\lambda_i = 0$ for each $i$ and proceed with the analysis under this assumption. This choice automatically satisfies the complementary slackness and dual feasibility conditions, regardless of the choices of $\Pi_i$. Substituting into \eqref{eq:LM:probability constraint}, we obtain
\begin{equation}\label{eq:nu0}
    e^{-\nu_0-1}=\frac{1}{\sum_{x^i\in X}\prod_{l\in V} e^{-\nu_lx^i_l}}.
\end{equation}
In constraint \eqref{eq:LM:P constraint}, the term $(X\Pi)_r$ for any $r\in V$ can be expressed as
\begin{equation}
\begin{aligned}
    &(X\Pi)_r = e^{-\nu_0-1} \sum_{x^i\in X} x^i_r \prod_{l\in V}e^{-\nu_l x^i_l} \\
    &=\frac{\sum_{x^i\in X}x^i_r \prod_{l\in V}e^{-\nu_l x^i_l}}{\sum_{x^i\in X}\prod_{l\in V} e^{-\nu_lx^i_l}}=\\
    &\frac{e^{-\nu_r}\sum_{x^i\in X_{r1}} \prod_{l\in V\setminus r}e^{-\nu_l x^i_l}}{e^{-\nu_r}\sum_{x^i\in X_{r1}} \prod_{l\in V\setminus r}e^{-\nu_l x^i_l}+\sum_{x^i\in X_{r0}} \prod_{l\in V\setminus r}e^{-\nu_l x^i_l}}.\\
\end{aligned}
\end{equation}
Here, we notice that in the denominator of the last line above, the summation involving $X_{r1}$ is equivalent to the summation involving $X_{r0}$, because the sets $X_{r1}$ and $X_{r0}$ are both of size $2^{n-1}$ and the terms in each of the summations do not depend on $x_r$. Therefore, the last line above is equivalent to $\frac{e^{-\nu_r}}{e^{-\nu_r}+1}$. From constraint \eqref{eq:LM:P constraint}, this yields $\nu_r=-\ln\frac{P_r}{1-P_r}$ for all $r \in V$. The multiplier $\nu_0$ can then be determined from \eqref{eq:nu0}.

We take the previous two derivations and substitute them back into \eqref{eq:LM:Pi}, to obtain
\begin{equation}
\begin{aligned}
    &\Pi_i=\frac{\prod_{l\in V}e^{x^i_l\ln P_l-x^i_l\ln(1-P_l)} }{\sum_{x^j\in X}\prod_{l\in V}e^{x^j_l\ln P_l-x^j_l\ln(1-P_l)}}\\
    &=\frac{(\prod_{l\in V}e^{\ln (1-P_l)})\prod_{l\in V}e^{x^i_l\ln P_l-x^i_l\ln(1-P_l)} }{(\prod_{l\in V}e^{\ln (1-P_l)})\sum_{x^j\in X}\prod_{l\in V}e^{x^j_l\ln P_l-x^j_l\ln(1-P_l)}}\\
    &=\frac{\prod_{l\in V}e^{x^i_l\ln P_l+(1-x^i_l)\ln(1-P_l)} }{\sum_{x^j\in X}\prod_{l\in V}e^{x^j_l\ln P_l+(1-x^j_l)\ln(1-P_l)}}\\
    &=\frac{\prod_{l\in V}\bigg(x^i_l P_l+(1-x^i_l)(1-P_l)\bigg) }{\sum_{x^j\in X}\prod_{l\in V}\bigg(x^i_l P_l+(1-x^i_l)(1-P_l)\bigg)}\\
    &=\prod_{l\in V}\bigg(x^i_l P_l+(1-x^i_l)(1-P_l)\bigg),
\end{aligned}
\end{equation}
where the fourth equality follows by taking cases on $x^i_l\in\{0,1\}$ and the final equality follows as the denominator sums to $1$, intuitively because the sum is
over distribution $\Pi$.
More technically, by selecting any $r\in V$ and breaking the sums over $X_{r1},X_{r0}$, we can see
that $P_r,1-P_r$ can be factored out in each case.
If this process is done for all $r\in V$, the sum will be over a singleton set of an empty product.
We thus have $\Pi_i$, $\lambda_i$, $\nu_0$, and $\nu_\ell$ that together satisfy all KKT conditions. This concludes the proof.
\end{proof}

To prove Theorem~\ref{thm:XMPi}, we need the following Lemma.
\begin{lemma}\label{thm:factoring lemma}
    If $\Pi=\Pi^*(P)$, then for any $\sigma\subseteq V$ we have
    \begin{equation}
        \prod_{l\in \sigma} P_{t,l}=\sum_{i\in \{1,\dots, 2^n\}} \Pi_i\prod_{l\in \sigma}x^i_l.
    \end{equation}
\end{lemma}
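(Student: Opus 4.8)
The plan is to exploit the explicit product form of $\Pi^*(P)$ furnished by Theorem~\ref{thm:max entropy}: since $\Pi^*(P)_i=\prod_{l\in V}\bigl(P_l x^i_l+(1-x^i_l)(1-P_l)\bigr)$, this distribution is precisely the law of $n$ independent Bernoulli variables with means $P_1,\dots,P_n$. Under that reading, $\sum_i \Pi_i\prod_{l\in\sigma}x^i_l$ is the expectation of a product of indicator variables, which by independence equals $\prod_{l\in\sigma}P_l$. I would state this as the conceptual reason and then carry out the rigorous argument by iterated factoring, in the same style as the closing paragraph of the proof of Theorem~\ref{thm:max entropy}.

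Concretely, I would substitute the closed form and write the right-hand side as
\[
\sum_{x^i\in X}\Bigl[\prod_{l\in V}\bigl(P_l x^i_l+(1-x^i_l)(1-P_l)\bigr)\Bigr]\prod_{l\in\sigma}x^i_l ,
\]
then pick an arbitrary $r\in V$ and split the outer sum over $X_{r1}$ and $X_{r0}$ (the sets defined at the start of the Appendix). If $r\notin\sigma$, the factor $\prod_{l\in\sigma}x^i_l$ is untouched, the $l=r$ term of the first product contributes $P_r$ on $X_{r1}$ and $1-P_r$ on $X_{r0}$, and $P_r+(1-P_r)=1$, so the expression collapses to the same one with $V$ replaced by $V\setminus\{r\}$ and $\sigma$ unchanged. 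If $r\in\sigma$, then every $x^i\in X_{r0}$ annihilates the summand (as $x^i_r=0$), only $X_{r1}$ contributes, and there the $l=r$ terms give $P_r\cdot 1$, so the expression collapses to the same one with $V$ replaced by $V\setminus\{r\}$ and $\sigma$ by $\sigma\setminus\{r\}$, carrying a factor $P_r$.

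In either case the statement is reduced to the same statement over a strictly smaller ground set, picking up the factor $P_r$ exactly when $r\in\sigma$; an induction on $|V|$ (with base case $V=\emptyset$, where the sum runs over the single empty tuple and equals the empty product $1$) then yields $\prod_{l\in\sigma}P_l$. I do not anticipate a real obstacle: the content is just the bookkeeping of the two cases and the degenerate base case ($\sigma=\emptyset$ recovers $\sum_i\Pi_i=1$), and this is exactly the factoring manipulation already invoked at the end of the Theorem~\ref{thm:max entropy} proof, so it can be written very compactly.
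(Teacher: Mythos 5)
Your proposal is correct and follows essentially the same route as the paper: both substitute the product form of $\Pi^*(P)$ from Theorem~\ref{thm:max entropy} and then perform the iterated factoring/marginalization, with the paper merely collapsing the sum over the $\sigma$-coordinates all at once (via the set $X^\sigma$) before peeling off the complement one node at a time, whereas you peel off every node with a two-case split. The Bernoulli-independence framing you give is a clean way to state the same content, and no gap remains.
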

\begin{proof}
Let $\Pi=\Pi^*(P)$ for some $P\in[0,1]^n$, 
$\sigma\subseteq V$, $\sigma'$ be the complement set of $\sigma$, $X^\sigma=\{i\in \{1,\dots 2^n\}\mid x^i_l=1 \mbox{ }\forall l\in \sigma\}$, $k\notin \sigma$, and
$X^\sigma_{k1}=\{i\in X^\sigma\mid x^i_k=1\}$ and similarly for $X^\sigma_{k0}$.
Beginning with the right side of the desired equality,
    \begin{align*}
    &\sum_{i\in \{1,\dots, 2^n\}} \Pi_i\prod_{l\in \sigma}x^i_l=\sum_{i\in X^\sigma} \Pi_i\nonumber\\
    &=\prod_{l\in \sigma}P_{t,l}\sum_{i\in X^\sigma} \prod_{l\in \sigma'}(P_{t,l}x^i_l+(1-P_{t,l})(1-x^i_l))\nonumber\\
    &=\prod_{l\in \sigma}P_{t,l}\bigg(P_{t,k}\sum_{i\in X^\sigma_{k1}} \prod_{l\in \sigma'\setminus k}(P_{t,l}x^i_l+(1-P_{t,l})(1-x^i_l))\nonumber\\
    &+(1-P_{t,k})\sum_{i\in X^\sigma_{k0}} \prod_{l\in \sigma'\setminus k}(P_{t,l}x^i_l+(1-P_{t,l})(1-x^i_l))\bigg)\nonumber\\
    &=\prod_{l\in \sigma}P_{t,l}\sum_{i\in X^\sigma_{k1}} \prod_{l\in \sigma'\setminus k}(P_{t,l}x^i_l+(1-P_{t,l})(1-x^i_l))\nonumber\\
    &=\prod_{l\in \sigma}P_{t,l}.\nonumber
    \end{align*}
Note that the second equality follows as $i\in X^\sigma$ implies that $\prod_{l\in \sigma}P_{t,l}$ must factor from $\pi_i$ by definition.
The third equality follows by separating the sums by $X^\sigma_{k1}$ and $X^\sigma_{k0}$ and by factoring out the appropriate $P_{t,k}$ or $1-P_{t,k}$. 
The fourth equality then follows as each product is no longer a function of $x^i_k$, and the size of the sum can be halved.
Since this process was done generically for any $k\in \sigma'$, the final equality follows by an iterative argument by doing this procedure for each $k\in \sigma'$.
After all members of $\sigma'$ are factored out, they will be over a singleton set and the product will be over the empty set, concluding the proof.
\end{proof}

\subsection*{Proof of Theorem~\ref{thm:XMPi}}
\begin{proof}
Throughout the proof we drop time index subscripts and use subscripts to index elements of vectors instead.
Let $\Pi=\Pi^*(P)$ for some $P\in[0,1]^n$.
    We begin by showing 
    \begin{align*}
        &\Pr(x_{t+1,r}=1\mid \Pi^*(P),a)=\sum_{x^i\in X}x^i_r\sum_{x^j\in X}\Pi_jM_{ij}\\
        &=\sum_{x^j\in X}\Pi_j\sum_{x^i\in X}x^i_r\prod^n_{l=1}\bigg(\eta^j_l x^i_l+(1-\eta^j_l)(1-x^i_l) \bigg)\\
        &=\sum_{x^j\in X}\Pi_j\sum_{x^i\in X_{r1}}\eta^j_r\prod^n_{l=1,l\neq r}\bigg(\eta^j_l x^i_l+(1-\eta^j_l)(1-x^i_l) \bigg) \\
        &=\sum_{x^j\in X}\Pi_j\eta^j_r,
    \end{align*}
where the last equality follows an identical factoring argument as in the proof of Lemma~\ref{thm:factoring lemma}.
Letting $\tilde{a}_r=1+a_r(\alpha-1)$, we have
\begin{align*}
    &\sum_{x^j\in X}\Pi_j\eta^j_r\\
    &=\tilde{a}_r\sum_{x^j\in X}\Pi_j\bigg(1-\prod_{l\in D_r}(1-\rho_{lr}x^j_l)+x^j_r\prod_{l\in D_r}(1-\rho_{lr}x^j_l)\bigg)\\
    &=\tilde{a}_r\bigg(1-\sum_{x^j\in X}\Pi_j\prod_{l\in D_r}(1-\rho_{lr}x^j_l)\\
    &\qquad\qquad+\sum_{x^j\in X}\Pi_jx^j_r\prod_{l\in D_r}(1-\rho_{lr}x^j_l)\bigg).
\end{align*}
Now we consider each of the two remaining sum terms separately.
Let $\mathcal{P}(D_r)$ denote the power set of in-neighbors of $r$ and $\mathcal{P}_k(D_r)=\{\sigma\in \mathcal{P}(D_r)\mid |\sigma|=k\}$ be all of the elements of the power set of size $k$.
Using this, we have
\begin{align*}
    &\sum_{x^j\in X}\Pi_j\prod_{l\in D_r}(1-\rho_{lr}x^j_l)\\
    &=\sum_{x^j\in X}\Pi_j\bigg(1-\sum_{\sigma \in \mathcal{P}_1(D_r)}\prod_{l\in \sigma}\rho_{lr}x^j_l \\
    &\quad +\sum_{\sigma \in \mathcal{P}_2(D_r)}\prod_{l\in \sigma}\rho_{lr}x^j_l-\cdots + \sum_{\sigma \in \mathcal{P}_{|D_r|}(D_r)}\prod_{l\in \sigma}\rho_{lr}x^j_l \bigg)\\
    &=1-\sum_{\sigma \in \mathcal{P}_1(D_r)}\big(\prod_{l\in \sigma}\rho_{lr}\big)\sum_{x^j\in X}\Pi_j\prod_{l\in \sigma}x^j_l \\
    &\quad +\cdots + \sum_{\sigma \in \mathcal{P}_{|D_r|}(D_r)}\big(\prod_{l\in \sigma}\rho_{lr}\big)\sum_{x^j\in X}\Pi_j\prod_{l\in \sigma}x^j_l \\
    &=1-\sum_{\sigma \in \mathcal{P}_1(D_r)}\big(\prod_{l\in \sigma}\rho_{lr}\big)\prod_{l\in \sigma}P_l\\
    &\quad +\cdots + \sum_{\sigma \in \mathcal{P}_{|D_r|}(D_r)}\big(\prod_{l\in \sigma}\rho_{lr}\big)\prod_{l\in \sigma}P_l =\prod_{l\in D_r} (1-\rho_{lr} P_l),
\end{align*}
where the third equality is an application of Lemma~\ref{thm:factoring lemma}.
The next term follows a similar argument to the first three equalities, obtaining
\begin{align*}
    &\sum_{x^j\in X}\Pi_jx^j_r\prod_{l\in D_r}(1-\rho_{lr}x^j_l)\nonumber\\
    &=\sum_{x^j\in X}\Pi_jx^j_r-\sum_{\sigma \in \mathcal{P}_1(D_r)}\big(\prod_{l\in \sigma}\rho_{lr}\big)\sum_{x^j\in X}\Pi_jx^j_r\prod_{l\in \sigma}x^j_l \nonumber\\
    &\quad +\cdots + \sum_{\sigma \in \mathcal{P}_{|D_r|}(D_r)}\big(\prod_{l\in \sigma}\rho_{lr}\big)\sum_{x^j\in X}\Pi_jx^j_r\prod_{l\in \sigma}x^j_l \nonumber\\
    &=P_r-\sum_{\sigma \in \mathcal{P}_1(D_r)}\big(\prod_{l\in \sigma}\rho_{lr}\big)\prod_{l\in \sigma\cup r}P_l \nonumber\\
    &\quad +\cdots + \sum_{\sigma \in \mathcal{P}_{|D_r|}(D_r)}\big(\prod_{l\in \sigma}\rho_{lr}\big)\prod_{l\in \sigma\cup r}P_l\nonumber\\ 
    &=P_r\prod_{l\in D_r} (1-\rho_{lr} P_l) \nonumber\\
\end{align*}
where the second equality again follows from Lemma~\ref{thm:factoring lemma}.
Using the above two derivations as desired, we obtain 
\begin{equation*}
    (XMT)_r=\tilde{a}_r\bigg( P_r+(1-P_r)
    \bigg[1-\prod_{l\in D_r} 
    (1-\rho_{lr} P_l)\bigg]\bigg).
\end{equation*}
\end{proof}

\subsection*{Proof of Theorem~\ref{thm:XTPi}}
\begin{proof}
Throughout the proof, we drop the time index subscripts and use subscripts to index elements of vectors instead.
Let $\Pi=\Pi^*(P')$ for some $P'\in[0,1]^n$ and we start by considering $\frac{(X(T\circ \Pi))_l}{T^\top\Pi}$,
where $M,T$ are constructed appropriately via \eqref{eq:def Mt} and \eqref{eq:def Tt}, respectively.
Beginning with $T^\top\Pi$, we obtain
    \begin{align*}
        &T^\top\Pi\\
        &=\sum_{x^i\in X} \prod_{r\in V} \big(P'_r x^i_r+(1-P'_r)(1-x^i_r)\big) \Pr(y_r\mid x_t=x^i)\\
        &=\sum_{x^i\in X} \prod_{r\in V} P'_rx^i_r\bigg(py_r+\tilde{p}(1-y_r)\bigg)\\
        &\qquad+(1-P'_r)(1-x_r)\bigg(\tilde{q}y_r+q(1-y_r)\bigg)\\
        &=\sum_{x^i\in X} \prod_{r\in V} F_r(y,x^i)\\
        &=pP'_l y_l+\tilde{p}P'_l(1-y_l)\sum_{x^i\in X_{k1}} \prod_{r\in V\setminus l} F_r(y,a^i)\\
        &+(1-P'_l)(\tilde{q}y_l+q(1-y_l))\sum_{x^i\in X_{k0}} \prod_{r\in V\setminus l} F_r(y,x^i)\\
        &=\bigg(y_l\big[pP'_l+\tilde{q}(1-P'_l)\big]\\
        &+(1-y_l)\big[\tilde{p}P'_l+q(1-P'_l)\big]
        \bigg)\sum_{x^i\in X_{k1}} \prod_{r\in V\setminus l} F_r(y,x^i)
    \end{align*}
letting $F_r(y,x^i)$ be the appropriate function.
The final equality follows because for each $x^i\in X_{l1}$ there is a $x^j\in X_{l0}$ such that $x_r^i=x_r^j$ for all $r\in V\setminus l$.
Next, we evaluate the numerator.
    \begin{align*}
        &(X(T\circ \Pi))_l=\sum_{x^i\in X}x^i_l\prod_{r\in V} F_r(y,x^i)\\
        &=\bigg(y_lp P'_l +(1-y_l)\tilde{p}P_l\bigg)\sum_{x^i\in X_{l1}}\prod_{r\in V\setminus l} F_r(y,x^i).
    \end{align*}
Using these two derivations to reconstruct the original fraction, we see that $\sum_{x^i\in X_{r1}}\prod_{r\in V\setminus l} F_r(y,x^i)$ appears in both the numerator and the denominator (and is nonzero), leading to the desired equality and completing the proof.
\end{proof}
}

\end{document}